\theoremstyle{plain}\newtheorem{theorem}{Theorem}
\newtheorem{Lemma}{Lemma}
\newtheorem{Definition}{Definition}
\newtheorem{corollary}{Corollary}
\newtheorem{Fact}{Fact}
\newtheorem{remark}{Remark}
\newcommand{\inbrace}[1]{\left \{ #1 \right \}}
\newcommand{\inparen}[1]{\left ( #1 \right )}
\begin{document}
\begin{frontmatter}
%\title{Query complexity of generalized Simon's problem}
\author[a1]{Zekun Ye}
\author[a1,a2]{Lvzhou Li\corref{one}}

\cortext[one]{Corresponding author. %Mobile: +8613802437672; Corresponding address: School of Computer Science and Engineering, Sun Yat-sen University, Guangzhou 510006,
% China\\
 \indent {\it E-mail
address:} yezekun@mail2.sysu.edu.cn (Z. Ye); lilvzh@mail.sysu.edu.cn (L.
Li)}

\address[a1]{Institute of Quantum Computing and Computer Theory, School of Computer
Science and Engineering, Sun Yat-sen University, Guangzhou 510006, China}

\address[a2]{ Key Laboratory of Machine Intelligence and Advanced Computing, Ministry of Education, China}

\title{Deterministic Algorithms for the Hidden Subgroup Problem}
%\author[a1]{Zekun Ye}
%\author[a1,a2]{Lvzhou Li\corref{one}}

%\cortext[one]{Corresponding author. Mobile: +8613802437672; Corresponding address: School of Data and Computer Science, Sun Yat-sen University, Guangzhou 510006, China\\ \indent {\it E-mail address:} yezekun@mail2.sysu.edu.cn (Z. Ye); lilvzh@mail.sysu.edu.cn (L. Li)}

%\address[a1]{Institute of Quantum Computing and Computer Science Theory, School of Computer Science and Engineering, Sun Yat-sen University, Guangzhou 510006, China}
%\address[a2]{Ministry of Education Key Laboratory of Machine Intelligence and Advanced Computing (Sun Yat-sen University), Guangzhou 510006, China}

\begin{abstract}
We consider deterministic algorithms for the well-known hidden subgroup problem ($\mathsf{HSP}$): for a finite group $G$ and a finite set $X$, given a function $f:G \to X$ and the promise that for any $g_1, g_2 \in G, f(g_1) = f(g_2)$ iff $g_1H=g_2H$ for a subgroup $H \le G$, the goal of the decision version is to determine whether $H$ is trivial or not, and the goal of the identification version is to identify $H$. An algorithm for the problem should query $f(g)$ for $g\in G$ at least as possible.
Nayak \cite{Nayak2021} asked whether there exist deterministic algorithms with $O(\sqrt{\frac{|G|}{|H|}})$ query complexity for $\mathsf{HSP}$. We answer this problem by proving the following results, which also extend the main results of Ref. [30], since here the algorithms do not rely on any prior knowledge of $H$.
 
\begin{itemize}
	\item [(i)] 
	When $G$ is a general finite Abelian group, there exist an algorithm with $O(\sqrt{\frac{|G|}{|H|}})$ queries to decide the triviality of $H$ and an algorithm to identify $H$ with $O(\sqrt{\frac{|G|}{|H|}\log |H|}+\log |H|)$ queries.

	\item [(ii)] In general there is no deterministic algorithm for the identification version of $\mathsf{HSP}$ with query complexity of $O(\sqrt{\frac{|G|}{|H|}})$, since there exists an instance of $\mathsf{HSP}$ that needs $\omega(\sqrt{\frac{|G|}{|H|}})$ queries to identify $H$.\footnote{$f(x)$ is said to be $\omega(g(x))$ if for every positive constant $C$, there exists a positive constant $N$ such that for $x>N$, $f(x)\ge C\cdot g(x)$, which means $g$ is a strict lower bound for $f$.} On the other hand, there exist instances of $\mathsf{HSP}$ with query complexity far smaller than $O(\sqrt{\frac{|G|}{|H|}})$, whose query complexity is
$O(\log \frac{|G|}{|H|})$ and even $O(1)$. 
\end{itemize}

\end{abstract}
\begin{keyword}
Hidden subgroup problem, deterministic algorithm, query complexity, quantum computing
\end{keyword}

\end{frontmatter}

\section{Introduction}
\subsection{Background}
%\textbf{Hidden subgroup problem}. 

The hidden subgroup problem plays an important role in the history of quantum computing. %That is, given a function $f: G\to X$, where $G$ is a finitely generated group and $X$ is a finite set, such that $f$ is bijective on $G/H$ for a subgroup $H \le G$, the goal is to find $H$. 
Several important quantum algorithms such as Deutsch-Jozsa algorithm \cite{deutsch1992rapid}, Simon algorithm \cite{simon1994power}, and Shor algorithm \cite{shor1994algorithms} have a uniform description in the framework of the hidden subgroup problem \cite{jozsa1998quantum}. Moreover, many quantum algorithms were proposed for the instances of the hidden subgroup problem, e.g., \cite{bacon2005from,childs2007quantum,ettinger2004the,grigni2001quantum,hallgren2003the,kempe2005the,kuperberg2005a}.

The hidden subgroup problem is formally defined as follows.
\begin{Definition} \label{df_hsp}[The hidden subgroup problem]
\
\begin{tcolorbox}
			\label{identification version}
			\noindent
			\textbf{Given:} A finite group $G$; an (unknown) function $f:G \to X$, where $X$ is a finite set.
			\\
			\\
			\textbf{Promise:} There exists a subgroup $H$ such that for any $g_1, g_2 \in G, f(g_1) = f(g_2)$ iff $g_1H=g_2H$.
			\\
			\\
			\textbf{Problem:} Identify $H$.
		\end{tcolorbox}
\end{Definition}

The hidden subgroup problem consists of the Abelian hidden subgroup problem and the non-Abelian hidden subgroup problem according to whether $G$ is 
commutative or not. Many problems are special cases of the Abelian hidden subgroup problem, such as Simon's problem \cite{simon1994power}, generalized Simon's problem \cite{Ye2019query} and some important number-theoretic problems \cite{Hallgren2007Polynomial,Hallgren2005Fast,Schmidt2005Polynomial}. The non-Abelian hidden subgroup problem also received much attention \cite{grigni2001quantum,IvanyosMS01,kuperberg2005a,FriedlIMSS03,hallgren2003the,MooreRRS07,Regev04}. While there exist efficient quantum algorithms to solve the Abelian hidden subgroup problem \cite{simon1994power,BonehL95,Kitaev1995quantum,brassard1997exact,MoscaE98,EisentragerHK014}, many instances of the non-Abelian hidden subgroup problem are not known to have efficient quantum algorithms, such as the dihedral hidden subgroup problem and the symmetric hidden subgroup problem \cite{kuperberg2005a, Decker2013Hidden}. %The main approach for the non-Abelian hidden subgroup problem is the standard method. 

%\textbf{Query complexity of the hidden subgroup problem.} 

 In this paper, we focus on deterministic algorithms for the hidden subgroup problem.
 Deterministic and randomized algorithms are called classical algorithms. A deterministic algorithm for the above problem usually computes $f(g)$ for enough amount of $g\in G $ to find $H$ (or check whether $H$ is trivial or not). The number of $g\in G $ computed for $f$ is called the query complexity of the algorithm. For example, a naive algorithm with query complexity $O(|G|)$ is to compute $f(g)$ for all $g\in G $. A nontrivial problem worthy of consideration is: how difficult is this problem in terms of query complexity and how to find an optimal query algorithm for it. 
There exist a lot of discussions about the classical query complexity for some instances of the hidden subgroup problem. For example, the classical query complexity of Simon's problem was proven to be $\Theta(\sqrt{2^n})$ \cite{simon1994power,cai2018optimal,Wolf2019quantum}. Ye et al. \cite{Ye2019query} proved that a nearly optimal bound for the classical query complexity of generalized Simon's problem ($\mathsf{GSP}$). For the order-finding problem over $\mathbb{Z}_{2^m} \times \mathbb{Z}_{2^n}$, Cleve \cite{Cleve04} proved that the deterministic query complexity is at least $\Omega\left(\sqrt{\frac{2^n}{m}}\right)$, and the randomized query complexity is at least $\Omega\left(\frac{2^{n/3}}{\sqrt{m}}\right)$. Kuperberg \cite{kuperberg2005a} proved the classical query complexity of the dihedral hidden subgroup problem over the dihedral group $D_n$ is $\Omega(\sqrt{N})$. Childs \cite{Childs2021Lecture} showed that a classical algorithm must make $\Omega(\sqrt{N})$ queries if there are $N$ candidate subgroups whose only common element is the identity element. 

%The motivation for studying the deterministic algorithms for the Abelian hidden subgroup problem is as follows: 
Recently, Nayak \cite{Nayak2021} considered deterministic algorithms for the hidden subgroup problem. He gave a deterministic algorithm with $O(\sqrt{|G|})$ queries to identify $H$ for the Abelian hidden subgroup problem, and showed there exists a deterministic algorithm with query complexity $O(\sqrt{|G|\ln |G|})$ for the general hidden subgroup problem. Finally, he proposed an open problem: is there a deterministic algorithm to the hidden subgroup problem with query complexity $O(\sqrt{\frac{|G|}{|H|}})$? 
\subsection{ Our contributions}\label{sec:resut}

 In this paper, we will answer the above open problem.
In the following, the hidden subgroup problem will be abbreviated to $\mathsf{HSP}$. The one in Definition \ref{df_hsp} will be called the identification version of $\mathsf{HSP}$. We will also consider its decision version where the goal is to decide whether the hidden subgroup $H$ is trivial or not, instead of finding $H$.

%Particularly, if $G$ is an Abelian group, then the above problem is called the Abelian hidden subgroup problem ($\mathsf{AHSP}$). 
%In this paper, we consider the deterministic algorithms to solve the hidden subgroup problem. 

Our main results are the following theorems and corollaries, which not only answer the open problem proposed by Nayak \cite{Nayak2021}, but also extend the main results of Ref. \cite{Ye2019query}. Theorem \ref{th:simple_case} tells that there exist instances of $\mathsf{HSP}$ with query complexity far smaller than $O(\sqrt{\frac{|G|}{|H|}})$. Theorem \ref{th:general_case} shows that there exist deterministic algorithms with query complexity $O(\sqrt{\frac{|G|}{|H|}})$ for the decision version of the Abelian hidden subgroup problem and deterministic algorithms with query complexity not far from $O(\sqrt{\frac{|G|}{|H|}})$ for the identification version. Theorem \ref{th:lower_bound} and its corollary indicate that in general there is no deterministic algorithm with query complexity $O(\sqrt{\frac{|G|}{|H|}})$ for the identification version of $\mathsf{HSP}$. In addition, Corollary \ref{co:gsp} extends the main results of Ref. \cite{Ye2019query}, since the algorithm in Ref. \cite{Ye2019query} requires to know in advance the rank of the hidden group $H$, whereas the algorithms in this paper do not rely on any prior knowledge of $H$.

\begin{theorem}
\label{th:simple_case}
For $G = {\mathbb{Z}}_{p^k}$, where $p$ is a prime, there exist a deterministic algorithm using 2 queries to solve the decision version of $\mathsf{HSP}$, and a deterministic algorithm using $O(\log \frac{|G|}{|H|})$ queries to solve the identification version.
\end{theorem}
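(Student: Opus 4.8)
The plan is to exploit the especially rigid subgroup lattice of $G=\mathbb{Z}_{p^k}$: since $p$ is prime, the subgroups form a single chain
$$\{0\}=H_0\subsetneq H_1\subsetneq\cdots\subsetneq H_k=\mathbb{Z}_{p^k},$$
where $H_j=\langle p^{\,k-j}\rangle$ is the unique subgroup of order $p^j$, consisting precisely of the multiples of $p^{\,k-j}$. Hence the hidden subgroup $H$ equals $H_j$ for a single unknown $j$, and identifying $H$ amounts to determining $j$, equivalently the index $[G:H]=|G|/|H|=p^{\,k-j}$. Throughout I will use the equivalence that the promise gives for free: for every $g\in G$, $f(g)=f(0)$ if and only if $g\in H$, so each membership test "$g\in H$?" costs one query to $f$ (after a single initial query of $f(0)$).

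For the decision version I would observe that, because the subgroups are totally ordered, $H$ is nontrivial exactly when it contains the smallest nontrivial subgroup $H_1=\langle p^{\,k-1}\rangle$; that is, $H\neq\{0\}$ iff $p^{\,k-1}\in H$. By the equivalence above this is settled by querying $f(0)$ and $f(p^{\,k-1})$ and comparing them: equality certifies $p^{\,k-1}\in H$ (so $H$ is nontrivial), while inequality forces $H=\{0\}$. This uses exactly two queries.

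For the identification version, note that $p^i\in H$ iff $f(p^i)=f(0)$, and that the predicate $P(i):=[\,p^i\in H\,]$ is monotone in $i$: if $p^i\in H$ then every multiple of $p^i$, in particular $p^{i'}$ for all $i'\ge i$, lies in $H$. Thus there is a threshold $t:=\min\{i: p^i\in H\}$, and one checks $t=k-j$, so that $[G:H]=p^{t}$ and $|G|/|H|=p^{t}$. Finding $H$ therefore reduces to locating the threshold $t\in\{0,1,\dots,k\}$ of a monotone predicate. I would locate it by a one-sided (exponential/doubling) search: probe $i=1,2,4,8,\dots,\min(2^m,k)$ until $P$ first turns true, then binary-search the bracketing interval $(2^{m-1},2^m]$; the degenerate case $t=0$ (i.e. $H=G$) is caught by the single extra test $f(1)=f(0)$.

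The point I expect to be the main obstacle is pinning down the query bound, not the correctness. A plain binary search over the full range $\{0,\dots,k\}$ would cost $O(\log k)$ queries, which can badly exceed the target $O(\log\frac{|G|}{|H|})$ when the true index is small (for instance $t=1$ while $k$ is enormous), so the search must be independent of $k$. The doubling search instead costs $O(\log t)$ queries, since exponential probing reaches $2^m\ge t$ after $O(\log t)$ steps and the final binary search runs over an interval of length at most $2^{m-1}\le t$. It then remains to verify $O(\log t)\subseteq O\!\left(\log\frac{|G|}{|H|}\right)$, which follows from $\log t\le t\le t\log p=\log p^{t}=\log\frac{|G|}{|H|}$ (taking logarithms base $2$ and using $p\ge 2$, $t\ge 1$). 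This establishes the claimed $O(\log\frac{|G|}{|H|})$ bound for identification.
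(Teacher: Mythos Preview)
Your argument is correct. The decision part matches the paper exactly: query $f(0)$ and $f(p^{k-1})$, using that every nontrivial subgroup of $\mathbb{Z}_{p^k}$ contains $p^{k-1}$.

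For the identification part you take a different route. The paper simply scans linearly: query $f(p^0),f(p^1),f(p^2),\ldots$ and stop at the first $i$ with $f(p^i)=f(0)$; this returns $H=\langle p^i\rangle$ after $i+2$ queries. The key point (which you implicitly proved but then did not exploit) is that $|G|/|H|=p^{\,t}\ge 2^{\,t}$, so already $t\le \log_2\frac{|G|}{|H|}$, and a linear scan costing $O(t)$ queries is within the target bound. Your exponential-then-binary search is also correct and in fact achieves the stronger $O(\log t)$ query count, but it is more machinery than the statement requires: the obstacle you flagged---that a binary search over $\{0,\ldots,k\}$ would cost $O(\log k)$, which can exceed the target when $t$ is small---is real, yet the simplest fix is not doubling search but the na\"{\i}ve increasing scan, whose cost depends on $t$ rather than $k$. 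Both approaches are valid; the paper's buys simplicity, yours buys an exponentially smaller constant in terms of $t$.
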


\begin{theorem}\label{th:general_case}
There exist a deterministic algorithm using at most $O(\sqrt{\frac{|G|}{|H|}})$ queries to solve the decision version of $\mathsf{HSP}$ over a general finite Abelian group $G$, and a deterministic algorithm using at most $O(\sqrt{\frac{|G|}{|H|}\log H}+\log |H|)$ queries to solve the identification version.
\end{theorem}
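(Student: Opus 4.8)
The plan is to reduce both tasks to deterministic collision finding and to exploit a baby-step giant-step (meet-in-the-middle) construction over the invariant-factor decomposition $G \cong \mathbb{Z}_{n_1}\times\cdots\times\mathbb{Z}_{n_t}$. Since $f(g_1)=f(g_2)$ iff $g_1-g_2\in H$, deciding whether $H$ is trivial amounts to finding two queried points lying in a common $H$-coset, or certifying that none exist. For a product box $B=\prod_i\{0,1,\ldots,b_i-1\}$, I would build a query set $S=\prod_i S_i$ where each $S_i=\{0,\ldots,r_i-1\}\cup\{0,r_i,\ldots,(r_i-1)r_i\}$ with $r_i=\lceil\sqrt{b_i}\rceil$, so that $|S|=O(\sqrt{|B|})$ while the difference set $S-S=\prod_i(S_i-S_i)$ contains the symmetric box $\prod_i(-b_i,b_i)\supseteq B-B$. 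Then two points of $S$ collide under $f$ iff $(S-S)\cap(H\setminus\{0\})\neq\emptyset$; since $S-S\supseteq B-B$, such a collision is forced once $(B-B)\cap(H\setminus\{0\})\neq\emptyset$, which by pigeonhole holds whenever $|B|>|G|/|H|$, as $|B|$ points cannot inject into the $|G|/|H|$ cosets of $H$.

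For the decision version I would grow the box volume geometrically, $|B|=2,4,8,\ldots$ (extending one coordinate at a time, capped at $b_i\le n_i$), rerunning the test and stopping at the first collision. If the true subgroup has order $|H|$, a collision must appear once $|B|$ first exceeds $|G|/|H|$, i.e.\ while $|B|\le 2|G|/|H|$, so the total cost is the geometric sum $\sum_{V\le 2|G|/|H|}O(\sqrt V)=O(\sqrt{|G|/|H|})$; crucially this needs no prior knowledge of $|H|$. If the schedule reaches $B=G$ with no collision, then $S-S=G$ forces $H=\{0\}$ and we output ``trivial'', at cost $O(\sqrt{|G|})=O(\sqrt{|G|/|H|})$.

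For the identification version the baseline is to maintain the subgroup $H'\le H$ generated by the collision differences found so far, simulate queries to the quotient $G/H'$ through coset representatives, and repeatedly apply the decision procedure in $G/H'$ to either produce a new element of $H\setminus H'$ or certify $H'=H$. Each round costs $O(\sqrt{|G/H'|/|H/H'|})=O(\sqrt{|G|/|H|})$ and multiplies $|H'|$ by at least $2$, so there are at most $\log_2|H|$ rounds, giving the weaker bound $O(\sqrt{|G|/|H|}\,\log|H|)$. To reach the sharper $O(\sqrt{(|G|/|H|)\log|H|})$ I would batch this: run a single meet-in-the-middle search with box volume $|B|=O((|G|/|H|)\log|H|)$, collect all collisions at once, and argue that their differences already generate $H$; by Cauchy--Schwarz, finding all $\le\log_2|H|$ generators in one search of cost $\sqrt{|B|}$ beats spreading the work across rounds. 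The additive $O(\log|H|)$ term absorbs the exact resolution of the exponents inside each primary cyclic factor, obtained by the binary-search argument of Theorem~\ref{th:simple_case} applied to the $O(\log|H|)$ cyclic pieces, which dominates precisely when $|G|/|H|$ is small.

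The main obstacle is this batched generation step, i.e.\ deterministically guaranteeing that the collisions from a box of volume $\widetilde{O}(|G|/|H|)$ generate all of $H$ within the $\sqrt{(|G|/|H|)\log|H|}$ budget. The birthday heuristic predicts the right count---roughly $s^2/(|G|/|H|)\gtrsim\log|H|$ independent collisions with $s=\sqrt{(|G|/|H|)\log|H|}$---but derandomizing it is delicate: a fixed product box can be evaded by a ``diagonal'' higher-rank $H$ (for $G=\mathbb{Z}_2^n$ an axis-aligned subcube $B$ only ever recovers $B\cap H$), so a single static box cannot work for all $H$. I expect to resolve this by making the search adaptive---after recording $H'$, changing coordinates so the next box is aligned to expose $H/H'$---and by a telescoping cost analysis showing that the rounds in which genuinely many new generators appear are few, so the dominant term collapses to $\sqrt{(|G|/|H|)\log|H|}$ rather than $\sqrt{|G|/|H|}\,\log|H|$. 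Verifying this amortization, together with the bookkeeping that each quotient simulation preserves the hidden-subgroup promise, is where the real work lies.
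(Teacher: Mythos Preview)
Your decision-version plan has a concrete error in the size bound for the query set. With the per-coordinate construction $S=\prod_i S_i$, $S_i=\{0,\ldots,r_i-1\}\cup\{0,r_i,\ldots,(r_i-1)r_i\}$, one has $|S_i|\le 2r_i$ and hence $|S|\le 2^t\prod_i r_i\approx 2^t\sqrt{|B|}$, not $O(\sqrt{|B|})$. The exponential factor $2^t$ is fatal exactly in the regimes the theorem must cover: for $G=\mathbb{Z}_2^n$ every factor has $n_i=2$, so $r_i=2$, $S_i=\{0,1\}$, and $|S|=|B|$; your procedure then queries the entire box and costs $\Theta(|G|/|H|)$ rather than $\Theta(\sqrt{|G|/|H|})$. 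The pigeonhole/geometric-growth skeleton is sound, but a coordinatewise baby-step giant-step cannot realise the square-root saving when the group has many small invariant factors. The paper avoids this by splitting \emph{across} coordinates rather than within each one: its \textsf{findPair} subroutine picks a single index $i$ and writes $V=W_1-W_2$ with $W_1$ full on coordinates $1,\ldots,i-1$ and partial on $i$, and $W_2$ partial on $i$ and full on $i+1,\ldots,l$, so that $|W_1|,|W_2|=O(\sqrt{|V|})$ regardless of how many factors $V$ has. The paper also does not use pigeonhole for the decision version; instead it exploits the prime-power decomposition and the fact that any nonzero $h_d\in\mathbb{Z}_{p_d^{k_d}}$ has a multiple equal to $p_d^{k_d-1}$, which lets it certify nontriviality after covering only $G_{d-1}\times\{p_d^{k_d-1}\}$.

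For identification you correctly isolate the hard part and concede it is unfinished; the sketch you give (one large batched search plus a Cauchy--Schwarz/birthday heuristic) is not the route the paper takes, and your own counterexample (an axis-aligned box in $\mathbb{Z}_2^n$ only recovers $B\cap H$) shows a static batch cannot work. The paper's argument is instead a coordinate-by-coordinate sweep that maintains both $H'_i=H\cap G_i$ and a complement $V_i$ with $H'_i+V_i=G_i$, and crucially calls \textsf{findPair} with an \emph{adaptive asymmetry parameter} $r$ (the number of coordinates so far in which the inner search succeeded at exponent $j=0$), producing $|W_1|=O(\sqrt{|V|\,r})$ and $|W_2|=O(\sqrt{|V|/r})$. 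The amortisation then splits the outer iterations into those with $t_i\ge 1$ (contributing $O(\sqrt{|V|})$ in total via a geometric sum) and those with $t_i=0$ (there are $r$ of them, each costing $|W_2|\le\sqrt{|V|/r}$, totalling $O(\sqrt{|V|r})$), while $|W_1|$ is paid once. Since $r\le\log_2|H|$ and $|V|=|G|/|H|$, this yields $O(\sqrt{(|G|/|H|)\log|H|}+\log|H|)$. This adaptive-imbalance trick is the missing idea in your proposal; without it, your iterative quotient scheme only gives $O(\sqrt{|G|/|H|}\,\log|H|)$.
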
 

By Theorem \ref{th:general_case}, we have the following corollary.

\begin{corollary}
\label{co:gsp}
For $G = \mathbb{Z}_p^n$, where $p$ is a prime, there exist a deterministic algorithm using at most $O(\sqrt{p^{n-k}})$ queries to decide whether $H$ is trivial or not, and a deterministic algorithm using at most $O(\max\{k,\sqrt{k \cdot p^{n-k}}\})$ queries to identify $H$, where $k = \log_p |H|$.
\end{corollary}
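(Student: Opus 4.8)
The plan is to derive Corollary~\ref{co:gsp} directly from Theorem~\ref{th:general_case} by specializing the general Abelian bounds to the group $G=\mathbb{Z}_p^n$ and simplifying. First I would record the relevant quantities: since $H\le \mathbb{Z}_p^n$ is a subgroup of an elementary Abelian $p$-group, $H$ is itself a vector space over $\mathbb{F}_p$, so $|H|=p^k$ with $k=\log_p|H|$ a nonnegative integer, and consequently $\frac{|G|}{|H|}=\frac{p^n}{p^k}=p^{n-k}$. This is the only structural fact needed, and it is immediate from the classification of subgroups of $\mathbb{Z}_p^n$.

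Next I would substitute into the two query bounds of Theorem~\ref{th:general_case}. For the decision version the bound $O\!\left(\sqrt{|G|/|H|}\right)$ becomes $O\!\left(\sqrt{p^{n-k}}\right)$ verbatim, which is exactly the first claim. For the identification version I would take the bound $O\!\left(\sqrt{\tfrac{|G|}{|H|}\log|H|}+\log|H|\right)$ and substitute $\frac{|G|}{|H|}=p^{n-k}$ and $\log|H|=\log p^{k}=k\log p=\Theta(k)$, treating $p$ as a constant base (or absorbing $\log p$ into the $O(\cdot)$ constant). This yields $O\!\left(\sqrt{k\cdot p^{n-k}}+k\right)$. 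The only remaining step is to observe that $\sqrt{k\cdot p^{n-k}}+k = \Theta\!\left(\max\{k,\ \sqrt{k\cdot p^{n-k}}\}\right)$, since for any two nonnegative reals $a,b$ one has $\max\{a,b\}\le a+b\le 2\max\{a,b\}$; applying this with $a=k$ and $b=\sqrt{k\cdot p^{n-k}}$ gives the stated form $O\!\left(\max\{k,\sqrt{k\cdot p^{n-k}}\}\right)$.

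There is essentially no hard part here: the corollary is a routine specialization, and the main thing to be careful about is the bookkeeping of the logarithm base. I would make explicit at the outset that all logarithms are taken to a fixed base (or note that the base only changes constant factors), so that $\log|H|=\Theta(k)$ is justified cleanly and the $\log p$ factor does not clutter the final expression. With that convention stated, the substitution and the $\max$-versus-sum identity complete the proof in a few lines.
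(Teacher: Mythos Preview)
Your proposal is correct and follows essentially the same approach as the paper: substitute $|G|=p^n$, $|H|=p^k$ into Theorem~\ref{th:general_case}, then convert the sum $\sqrt{k\cdot p^{n-k}}+k$ to $\max\{k,\sqrt{k\cdot p^{n-k}}\}$ via the inequality $\max\{a,b\}\le a+b\le 2\max\{a,b\}$. The only cosmetic difference is that you spell out $\log|H|=\Theta(k)$ via the logarithm base, whereas the paper silently absorbs the $\log p$ factor into the $O(\cdot)$.
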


\begin{proof}
First, $k = \log_p |H|$ is equivalent to $|H| = p^k$.
Then substituting $|G| = p^n$ and $|H| = p^k$ into Theorem \ref{th:general_case}, we have $O(\sqrt{\frac{|G|}{|H|}}) = O(\sqrt{p^{n-k}})$ and 
$$
\begin{aligned}
O(\sqrt{\frac{|G|}{|H|}\log |H|}+\log |H|) &= O(\sqrt{p^{n-k} \cdot k}+k) \\
&= O(\max\{k,\sqrt{k \cdot p^{n-k}}\}),
\end{aligned}
$$
where the last equation follows from the fact that for $a,b \ge 0$, since $\frac{a+b}{2} \le \frac{a+b}{2}+\frac{|a-b|}{2} = \max\{a,b\} \le a+b$, we have $\Theta\inparen{\max\{a,b\}} = \Theta\inparen{a+b}$.  
\end{proof}

%\begin{remark}
Note that Ref. \cite{Ye2019query} also gave an identification algorithm with the same query complexity as Corollary \ref{co:gsp}. The difference between them is that the algorithm in Ref. \cite{Ye2019query} requires that $k$ is known in advance, whereas our algorithms do not need to know $k$ initially. Thus, Corollary \ref{co:gsp} is a stronger conclusion and extend Theorem 3 of Ref. \cite{Ye2019query}.
%\end{remark} 
 
\begin{theorem}
\label{th:lower_bound}
Any deterministic algorithm needs $\Omega(\frac{\log |\mathcal{H}|}{\log \frac{|G|}{|H|}})$ queries to solve the identification version of $\mathsf{HSP}$, where $\mathcal{H} = \{H' \le G| |H'| = |H|\}$.
\end{theorem}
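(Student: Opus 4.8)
The plan is to prove the bound by a standard decision-tree (adversary-counting) argument, treating the deterministic algorithm as a rooted tree $T$ in which every internal node is labelled by the group element $g \in G$ that the algorithm queries at that point, every edge leaving a node corresponds to one possible response value $f(g)$, and every leaf is labelled by the subgroup the algorithm outputs. Because the algorithm is deterministic and adaptive, the element queried at a node and the decision taken at a leaf are completely determined by the sequence of responses seen so far, so $T$ captures the whole behaviour of the algorithm and its worst-case query complexity is exactly the depth of $T$.

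First I would bound the branching factor of $T$. Since $f$ is constant on each coset and takes distinct values on distinct cosets, its range has exactly $[G:H] = \frac{|G|}{|H|}$ elements; hence any single query $f(g)$ can return at most $\frac{|G|}{|H|}$ distinct values, and every internal node of $T$ has at most $\frac{|G|}{|H|}$ children. Consequently a tree of depth $t$ has at most $\left(\frac{|G|}{|H|}\right)^{t}$ leaves.

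Next I would lower-bound the number of leaves that a correct algorithm must have. Every $H' \in \mathcal{H}$ has order $|H|$ and therefore yields a legitimate instance of the identification problem, on which a correct algorithm must output $H'$. If two distinct candidates $H_1, H_2 \in \mathcal{H}$ led to the same leaf of $T$, the algorithm would return the same subgroup for both and hence err on at least one; therefore the instances corresponding to the $|\mathcal{H}|$ candidate subgroups must reach pairwise distinct leaves, so $T$ has at least $|\mathcal{H}|$ leaves. Combining this with the leaf bound of the previous step gives $\left(\frac{|G|}{|H|}\right)^{t} \ge |\mathcal{H}|$, and taking logarithms yields $t \ge \frac{\log |\mathcal{H}|}{\log \frac{|G|}{|H|}}$, which is the claimed $\Omega\!\left(\frac{\log |\mathcal{H}|}{\log \frac{|G|}{|H|}}\right)$ bound.

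I expect the delicate point to be the leaf-separation claim in the last step, because the labels produced by $f$ are merely arbitrary names for cosets and carry no structural information: a priori two different hidden subgroups could be labelled so as to produce identical transcripts on the actually-queried elements. To make the separation rigorous I would fix, for each $H' \in \mathcal{H}$, a single canonical representative instance $f_{H'}$ and observe that all the algorithm can extract from a transcript is the induced partition of the queried elements into cosets; since distinguishing $H_1$ from $H_2$ forces at least one queried element to fall into different cosets under the two subgroups, the responses must eventually differ, which is precisely what routes the two instances to different leaves. Once this is settled, the branching-factor and leaf-counting steps are purely combinatorial and routine.
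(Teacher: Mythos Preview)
Your approach is the same as the paper's: both argue that a correct decision tree must have at least $|\mathcal{H}|$ leaves and branching factor at most $|G|/|H|$, and then take logarithms. The one place the paper is more careful is the branching-factor step: since the codomain $X$ in the general $\mathsf{HSP}$ may be arbitrarily large, the number of distinct responses to a fixed query across \emph{all} admissible instances is not a priori bounded by $|G|/|H|$; the paper handles this by first passing to the restricted problem $\mathsf{HSP}^{+}$ in which $|X|=|G|/|H|$ is promised (any $\mathsf{HSP}$ algorithm also solves $\mathsf{HSP}^{+}$, so a lower bound for the latter suffices). Your device of fixing canonical representatives $f_{H'}$ accomplishes the same thing once you require them all to share a common range of size $|G|/|H|$, but note that you have located the subtlety in the wrong step --- leaf separation is automatic for a correct algorithm, whereas the branching bound is where the restriction on $X$ is actually needed.
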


It is easy to get the following corollary from Theorem \ref{th:lower_bound}. It was also implied by Lemma 2 of \cite{Ye2019query}.

\begin{corollary}
\label{pro:lower_bound}
Any deterministic algorithm needs $\Omega(\log |H|)$ queries to solve the identification version of $\mathsf{HSP}$ over $G = \mathbb{Z}_p^n$.
%The query complexity of the identification of hidden subgroup problem is $\Omega(\log |H|)$ over $G = \mathbb{Z}_p^n$.
\end{corollary}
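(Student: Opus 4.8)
The plan is to specialize Theorem~\ref{th:lower_bound} to $G = \mathbb{Z}_p^n$ and to estimate the two quantities appearing in its bound, namely $\log|\mathcal{H}|$ and $\log\frac{|G|}{|H|}$. Since $p$ is prime, $\mathbb{Z}_p^n$ is an $n$-dimensional vector space over $\mathbb{F}_p$, and its subgroups are exactly its linear subspaces. Writing $k = \log_p|H|$, a subgroup of order $p^k$ is a $k$-dimensional subspace, so $\mathcal{H} = \{H' \le G : |H'| = |H|\}$ is precisely the set of $k$-dimensional subspaces of $\mathbb{F}_p^n$. Hence $|\mathcal{H}|$ equals the Gaussian binomial coefficient $\binom{n}{k}_p = \prod_{i=1}^{k}\frac{p^{\,n-k+i}-1}{p^{\,i}-1}$.

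First I would bound this count from below, since it controls the numerator. Each factor satisfies $\frac{p^{\,n-k+i}-1}{p^{\,i}-1}\ge p^{\,n-k}$, because the inequality $p^{\,n-k+i}-1 \ge p^{\,n-k}(p^{\,i}-1)$ is equivalent to $p^{\,n-k}\ge 1$. Multiplying the $k$ factors yields $|\mathcal{H}| \ge p^{\,k(n-k)}$, and therefore $\log|\mathcal{H}| \ge k(n-k)\log p$. On the other side, the index is simply $\frac{|G|}{|H|} = p^{\,n-k}$, so $\log\frac{|G|}{|H|} = (n-k)\log p$.

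Plugging these into Theorem~\ref{th:lower_bound} (and assuming $k < n$, so the denominator is nonzero), the base of the logarithm cancels and the bound becomes
$$
\Omega\inparen{\frac{\log|\mathcal{H}|}{\log\frac{|G|}{|H|}}} = \Omega\inparen{\frac{k(n-k)\log p}{(n-k)\log p}} = \Omega(k).
$$
Since $p$ is a fixed constant, $\log|H| = k\log p = \Theta(k)$, so $\Omega(k) = \Omega(\log|H|)$, which is the claimed bound.

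I expect no serious obstacle here: the only nonroutine ingredient is the elementary lower bound $\binom{n}{k}_p \ge p^{\,k(n-k)}$, which I would verify directly as above rather than cite. The single point needing care is the reading of $\log|H|$: the derived bound is $\Omega(k)$, and matching it to $\Omega(\log|H|)$ relies on $p$ being constant (equivalently, one may read $\log$ as $\log_p$, whence $\log|H| = k$ exactly). The assumption $k<n$ is also harmless, as the identification problem is vacuous when $H = G$.
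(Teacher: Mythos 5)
Your proposal is correct and follows essentially the same route as the paper: specialize Theorem~\ref{th:lower_bound}, lower-bound the number of order-$p^k$ subgroups by $p^{k(n-k)}$ (the paper cites this count and bound from Ref.~\cite{Ye2019query}, while you verify the Gaussian-binomial inequality directly), and cancel to get $\Omega(k)=\Omega(\log|H|)$. Your explicit handling of the $k<n$ assumption and of the base of the logarithm is slightly more careful than the paper's, but the argument is the same.
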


\begin{proof}
For $G = \mathbb{Z}_p^n$, suppose $H$ is a subgroup of rank $k$ in $G$. Then $|H| = p^k$. 
By Ref. \cite{Ye2019query},
$$
|\mathcal{H}| = \prod_{j=0}^{k-1} \frac{p^{n}-p^j}{p^{k}-p^j} > p^{(n-k)k}.
$$
Thus, by Theorem \ref{th:general_case}, a lower bound on the query complexity is
$$
\Omega\inparen{\frac{\log |\mathcal{H}|}{\log \frac{|G|}{|H|}}} = \Omega\inparen{\frac{\log p^{(n-k)k}}{\log p^{n-k}}} = \Omega\inparen{k} = \Omega\inparen{\log |H|}.
$$
\end{proof}

\begin{remark} If $G = \mathbb{Z}_p^n$ and $|H| = p^{n-1}$, then any deterministic algorithm needs $\Omega(n)$ queries to identify $H$ by Corollary \ref{pro:lower_bound}. Note that in this case we have $\sqrt{\frac{|G|}{|H|}} = O(1)$. Therefore, any deterministic algorithm for the problem requires $\omega(\sqrt{\frac{|G|}{|H|}})$ queries to identify $H$.
\end{remark}
\subsection{Organization}
	The remainder of the paper is organized as follows. In Section \ref{sec:Pre}, we review some notations used in this paper.
	In Section \ref{sec:simple_case}, two deterministic algorithms are presented for solving $\mathsf{HSP}$ over $G = {\mathbb{Z}}_{p^k}$. %In Section \ref{sec:Generalization}, for a special case of $\mathsf{HSP}$, we present a better sample complexity upper bound. 
	In Section \ref{sec:general_case}, algorithms are designed to solve $\mathsf{HSP}$ over general finite Abelian groups. In Section \ref{sec:lower_bound}, a general lower bound is given for the query complexity of the identification version of $\mathsf{HSP}$. Finally, a conclusion is made in Section \ref{sec:Conclusion}.

\section{Preliminary}
\label{sec:Pre}
In this section, we present some notations used in this paper. Let $[m] = \{1,2,...,m\}$
and $\mathbb{Z}_{p^k}$ denote the additive group of elements $\{0,1,...,p^k-1\}$ with addition modulo $p^k$.
For two groups $G_1,G_2$, let $G_1 \times G_2$ denote the direct product of $G_1$ and $G_2$.
For an arbitrary finite Abelian group $G$, we have $G \cong \{0\} \times \mathbb{Z}_{p_1^{k_1}} \times \cdots \times \mathbb{Z}_{p_l^{k_l}}$ \footnote{Generally, we say $G \cong \mathbb{Z}_{p_1^{k_1}} \times \cdots \times \mathbb{Z}_{p_l^{k_l}}$. In this paper, we add \{0\} for the simplicity of algorithms discription.}. For $a,b \in G$, suppose $a = (0,a_1,...,a_l),b=(0,b_1,...,b_l)$. Then we define the group operation ``+" as follows:
$$
%\begin{aligned}
a+b = (0,a_1+b_1\mod p_1^{k_1},...,a_l+b_l\mod p_l^{k_l}). \\
%\end{aligned}
$$
Correspondingly, we define 
$$
a-b = (0,a_1-b_1\mod p_1^{k_1},...,a_l-b_l\mod p_l^{k_l}).
$$
For $W_1,W_2 \subseteq G$, we have
$$
\begin{aligned}
W_1+W_2 &= \{a+b| a \in W_1, b\in W_2\}, \\
W_1-W_2 &= \{a-b|a \in W_1, b \in W_2\}. \\
\end{aligned}
$$
  $W_1,W_2$ is called a \textit{generating pair} of $V$ if $V = W_1-W_2$.  $W_1,W_2$ is called to have  a collision if there exist $x \in W_1,y \in W_2$ such that $f(x) = f(y)$.

We denote that $H$ is a subgroup of $G$ by $H \le G$. For $H \le G$, a subset $S$ is said to be a generating set for $H$ if all elements in $H$ can be expressed as the finite sum of elements in $S$ and their inverses, i.e., $H = \langle S \rangle = \{{l_1}a_1+{l_2}a_2+\cdots+{l_n}a_n|a_i \in S, l_i = \pm 1, n\in \mathbb{N}\}$. For $g \in G$, we have $\langle g \rangle = \{\underbrace{g+\cdots+g}_{n}|n\in \mathbb{Z}\}$ and $gH = \{g+h|h\in H\}$. If $g$ is a nonzero element, we let $I_g$ denote the maximum nonzero coordinate. For example, if $g = (0,g_1,...,g_i,0,...,0)$ and $g_i \neq 0$, then $I_g = i$. If $g = 0$, then we let $I_g = 0$.  
For $g_1,g_2 \in G$, $g_1H = g_2H$ is equivalent to $g_1-g_2 \in H$. 
%The \textit{rank} of $G$ is the cardinality of a minimal generating set of $G$, denoted by $\mathsf{r}(G) = \min\{|S|:S\subseteq G, \langle S \rangle = G\}$.
The group with only one element, the identity element, is called the \textit{trivial group}. 
 %Note that if $H$ is a subgroup of $\mathbb{Z}_p^n$, then $\mathsf{r}(H) = k$ if and only if $|H| = p^k$.
 
%In this section, we present some notations used in this paper. 

%Let $\mathbb{Z}_{p^k}$ denote the additive group of elements $\{0,1,...,p^k-1\}$ with addition modulo $p^k$. we say a group is \textit{trivial} if it only contains identity element.
%Let $\mathbb{Z}_p$ denote the additive group of elements $\{0,1,...,p-1\}$ with addition modulo $p$, and $\mathbb{Z}_p^*$ denote the multiplication group of elements $\{1,2,...,p-1\}$ with multiplication modulo $p$. In the following, all the groups we mention are $\mathbb{Z}_p^n$ or its subgroup without special instructions, where $p$ is a prime. Let $x,y \in \mathbb{Z}_p^n$ with $x = (x_1,x_2,...,x_n)$ and $y = (y_1,y_2,...,y_n)$. For $x,y \in \mathbb{Z}_p^n$ with $x = (x_1,x_2,...,x_n)$ and $y = (y_1,y_2,...,y_n)$, we define

%\begin{Definition}[Complement Subgroup]
		%\label{def:complement_group}
		%For a group $G$ and its subgroup $H$, a group $\overline{H}_G$ is called a complement subgroup of $H$ in $G$ if
		%$
	%	H + \overline{H}_G = G \; \text{and} \; H \cap \overline{H}_G = \{0^n\}.
	%	$
	%\end{Definition}
%For $G = G_1 \times \cdots \times G_l$, we use $g = (g_1,...,g_l)$ to represent its element. Moreover, 
Additionally, we use $W_1 \times \cdots \times W_i$ to represent $W_1 \times \cdots \times W_i \times \underbrace{\{0\} \cdots \times \{0\}}_{l-i+1}$ for simplicity. In the description and analysis of Algorithm \ref{al:main1} and \ref{al:main2}, let $G_i = \{0\} \times \mathbb{Z}_{p_1^{k_1}} \times \cdots \times \mathbb{Z}_{p_i^{k_i}}$, $H_i = \{h\in H|I_h \le i\}$.  
%We also abbreviate $G_1 \times \{0\} \cdots \times \{0\}+\{0\} \times G_2 \times \cdots \times \{0\}$ as $G_1+G_2$.  
%For example, 
	
\section{Algorithms for $\mathsf{HSP}$ over $G = {\mathbb{Z}}_{p^k}$}
\label{sec:simple_case}
In this section, we propose  algorithms to solve the hidden subgroup problem over $G = {\mathbb{Z}}_{p^k}$ ($p$ is a prime), which establishes  Theorem \ref{th:simple_case}. %and \ref{th:simple_case_identification}.
 
\subsection{Decision version}
 The algorithm of the decision version is as follows: Query $f(0)$ and $f(p^{k-1})$. If $f(0) = f(p^{k-1})$, then $H$ is non-trivial; otherwise, $H$ is trivial. The correctness of the algorithm relies on the following fact.
	\begin{Fact}
	The subgroups of ${\mathbb{Z}}_{p^k}$ consist of $\{0\}, \langle p^{k-1} \rangle, \langle p^{k-2} \rangle,..., \langle 1 \rangle$.  
	\end{Fact}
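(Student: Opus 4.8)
The plan is to exploit the fact that $\mathbb{Z}_{p^k}$ is cyclic, so that every subgroup is itself cyclic and is generated by its least positive element; the classification then reduces to determining which elements can arise as such a generator. First I would dispose of the trivial subgroup $\{0\}$ separately, and then fix an arbitrary nontrivial subgroup $H \le \mathbb{Z}_{p^k}$ to analyze.

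For such an $H$, I would set $m = \min \inbrace{a \in H : a > 0}$ and show $H = \langle m \rangle$. The inclusion $\langle m \rangle \subseteq H$ is immediate from closure of $H$ under addition and inverses. For the reverse inclusion, given any $a \in H$ I would apply the division algorithm to write $a = qm + r$ with $0 \le r < m$; since $a$ and $qm$ both lie in $H$, so does $r = a - qm$, and the minimality of $m$ forces $r = 0$, whence $a \in \langle m \rangle$.

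Next I would show $m \mid p^k$ by the same division-algorithm trick applied to the integer $p^k$: writing $p^k = qm + r$ with $0 \le r < m$ and using $p^k \equiv 0$ in the group gives $r \equiv -qm$, so $r$ reduces to an element of $H$, and minimality again yields $r = 0$. Because $p$ is prime, the only divisors of $p^k$ are the powers $p^0, p^1, \ldots, p^k$, so $m = p^i$ for some $0 \le i \le k$; since $m$ is a positive element of $\{0, 1, \ldots, p^k - 1\}$ we have $m < p^k$, hence $0 \le i \le k-1$. Therefore every nontrivial $H$ equals $\langle p^i \rangle$ for some $i \in \{0, \ldots, k-1\}$, which together with $\{0\}$ exhausts the list $\{0\}, \langle p^{k-1} \rangle, \ldots, \langle 1 \rangle$. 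Finally I would observe that these subgroups are pairwise distinct, since $\langle p^i \rangle$ has order $p^{k-i}$.

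The argument is elementary and is essentially a specialization of the standard classification of subgroups of a finite cyclic group. The only step requiring a little care is the division-algorithm reduction establishing $m \mid p^k$, but even that is routine, so I do not anticipate a genuine obstacle here.
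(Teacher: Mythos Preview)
Your proposal is correct: it is the standard division-algorithm classification of subgroups of a finite cyclic group, specialized to order $p^k$. The paper, however, does not give a proof of this statement at all; it is recorded as a \emph{Fact} and treated as a well-known structural property of cyclic $p$-groups, so there is no argument in the paper to compare against. Your write-up would serve perfectly well as the omitted justification.
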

Since all non-trivial subgroups of $G$ contain the element $p^{k-1}$, we only need to check whether $p^{k-1} \in H$ or not to decide the triviality of $H$.

\subsection{Identification version}
%In the following, we give the algorithm of the identification version (Algorithm \ref{al:simple}). 
%This is a divide-and-conquer algorithm. 
%\begin{algorithm}
%	\caption{Find $H$ in $\mathbb{Z}_{p^k}$}
%	\label{al:simple}
%		\begin{algorithmic}[1]
%			\STATE Query $0$, $min = 0$, $max = k-1$; 
%			\WHILE{$min \neq max$}
%			\STATE $i = (max+min)/2$;
%			\STATE Query $p^i$;
%			\IF{$f(0) = f(p^i)$}
%			\STATE $min = i$;
%			\ELSE
%			\STATE $max = i-1$;
%			\ENDIF
%			\ENDWHILE
%			\RETURN $\langle p^{min} \rangle$;
%		\end{algorithmic}
%	\end{algorithm}
%After each iteration in while-loop, the last possible subgroups are $\langle p^{min} \rangle,...,\langle p^{max} \rangle$. 	
%	There are $k$ subgroups of $G$. By each query, we shrink the size of the set of the possible subgroups into a half. Thus, by $O(\log k) = O(\log \log |G|)$ queries, we can determine $H$.
In the following, we give an algorithm to identify $H$ (Algorithm \ref{al:simple2}). The idea of Algorithm \ref{al:simple2} is to check whether $p^i \in H$ for $0 \le i \le k-1$. If there exists some $i$ such that $p^i \in H$, then $H = \langle p^i \rangle$; otherwist, $H$ is trivial.
%Different from Algorithm \ref{al:simple}, the query complexity of Algorithm \ref{al:simple2} depends on $H$, i.e., the query complexity of this algorithm is $O(\log \frac{|G|}{|H|})$. 

	\begin{algorithm}
	\caption{Find $H$ in $\mathbb{Z}_{p^k}$}
	\label{al:simple2}
		\begin{algorithmic}[1]
			\STATE Query $0$;
			\FOR{$i=0 \to k-1$}
				\STATE Query $p^i$;
				\IF{$f(0) = f(p^i)$}
					\RETURN $\langle p^i \rangle$;
				\ENDIF
			\ENDFOR
			\RETURN $\{0\}$;
		\end{algorithmic}
	\end{algorithm}
	
Specifically, we analyze the correctness and query complexity of Algorithm \ref{al:simple2} as follows. If $H = \langle p^j \rangle$ for some $0 \le j \le k-1$, then %thus $j = \log_p \frac{|G|}{|H|}$. 
 $f(0) = f(x)$ iff $x \in H$, i.e., $p^j|x$. Thus, the algorithm will perform the loop repeatly for $0 \le i \le j$. Finally, the algorithm returns $\langle p^j \rangle$. %Totally, the algorithm make $j+2$ queries: $0, p^0,...,p^j$. 
 Since $|G| = p^k, |H| = p^{k-j}$, the query complexity of the algorithm is $1+(j+1) = O(\log \frac{|G|}{|H|})$.

If $H = \{0\}$, then the algorithm will perform the loop entirely since $f(0) \neq f(p^i)$ for any $0 \le i \le k-1$. Finally, the algorithm returns $\{0\}$. Since $|H| = 1$, the query complexity of the algorithm is $1+k = O(\log \frac{|G|}{|H|})$.

%Since the algorithm, At last, we analyze the query complexity of Algorithm \ref
\section{General algorithms for $\mathsf{HSP}$ over finite Abelian groups}\label{sec:general_case}
In this section, we will discuss the general Abelian hidden subgroup problem. Since any finite Abelian group is isomorphic to $\{0\} \times \mathbb{Z}_{p_1^{k_1}} \times \cdots \times \mathbb{Z}_{p_l^{k_l}}$, where $p_i$'s are primes and $k_i \ge 1$ for any $i$, we only need to consider the case $G = \{0\} \times \mathbb{Z}_{p_1^{k_1}}\times \cdots \times \mathbb{Z}_{p_l^{k_l}}$. The decision and identification versions are discussed in Section \ref{sec:general_case_decision} and \ref{sec:general_case_identification}, respectively.
The query complexities of our algorithms are related to $H$, even if we do not know any information about $H$ initially. 
\subsection{Finding generating pairs}
We first give Algorithm \ref{al:subroutine} as follows, which is a subroutine of the main algorithms solving $\mathsf{HSP}$ (i.e., Algorithm \ref{al:main1} and \ref{al:main2}). The goal of Algorithm \ref{al:subroutine} is to find a generating pair $W_1,W_2$ such that $W_1-W_2 = V$, $|W_1| \le 2\lceil\sqrt{|V|\cdot r}\rceil$, $|W_2| \le \lceil\sqrt{|V|/r}\rceil$. This subroutine does not make queries. %After running the subroutine, we hope to remain the following property: $W_1 = \{0\} \times \{0,1,...,a_1\} \times \cdots \times \{0,1,...,a_l\}$, $W_2 = \{0\} \times \{0,-a_1,...,-(a_1-1)a_1\} \cdots \times \{0,-a_l,...,-(a_l-1)a_l\}$ for some $\vec{a} = (a_1,...,a_l)$. Thus, we will use $\vec{a}$ to represent $W_1,W_2$, and the goal of the subroutine is equivalent to transfer $\vec{a}$ to $\vec{a'}$. 

%Note that $W_1, W_2$ only depend on 

%Suppose $W_1 = (a_1,a_2,...,a_l)$. The goal is to adjust $(a_1',a_2',...,a_l')$. %The goal of this subroutine is to adjust $W_1,W_2$ such that $\lceil |W_1|/|W_2| = r$. This subroutine does not make queries.
\begin{algorithm}
\caption{findPair}
\label{al:subroutine}
\begin{algorithmic}[1]
\REQUIRE $V = \{0\} \times \mathbb{Z}_{p_1^{j_1}} \times \mathbb{Z}_{p_2^{j_2}} \times \cdots \times \mathbb{Z}_{p_l^{j_l}}$ and $r \in [1,l)$, where $p_i$'s are primes and $j_i \ge 0$ for any $i$;
\ENSURE $W_1,W_2$ s.t. $W_1-W_2 = V$, $|W_1| \le 2\lceil\sqrt{|V|\cdot r}\rceil$, $|W_2| \le \lceil\sqrt{|V|/r}\rceil$.

\STATE Select $i$ such that $\prod_{m=i+1}^l p_m^{j_m} < \lceil\sqrt{|V|/r}\rceil \le \prod_{m=i}^l p_m^{j_m}$;
 \STATE Let $a = \lfloor \frac{\lceil\sqrt{|V|/r}\rceil}{\prod_{m=i+1}^l p_m^{j_m}} \rfloor$, $b = \lceil \frac{p_i^{j_i}}{a} \rceil$;  
\STATE $W_1 = \{0\} \times \mathbb{Z}_{p_1^{j_1}} \times \cdots \times \mathbb{Z}_{p_{i-1}^{j_{i-1}}} \times \{0,1,...,b-1\} \times \{0\} \times \cdots \times \{0\}$;
\STATE $W_2 = \{0\} \times \cdots \times \{0\} \times \{0,-b,...,-(a-1)b\} \times \mathbb{Z}_{p_{i+1}^{j_{i+1}}} \times \cdots \times \mathbb{Z}_{p_{l}^{j_{l}}}$.  
\RETURN $W_1,W_2$.
		\end{algorithmic}
	\end{algorithm}

%According to the above subroutine, we have
In the following, we show that the above returned values are satisfied. Since $a \ge \frac{\lceil\sqrt{|V|/r}\rceil}{\prod_{m=i+1}^l p_m^{j_m}}$, we have
%$a \le 2\frac{\sqrt{|V|/r}}{\prod_{m=i+1}^l p_m^{j_m}}$ and 
$\prod_{m=i+1}^l p_m^{j_m} \ge \frac{\lceil\sqrt{|V|/r}\rceil}{a}$. Since $\lceil\sqrt{|V|/r}\rceil \le \prod_{m=i}^l p_m^{j_m}$, we have 
$$
\prod_{m=1}^{i-1} p_m^{j_m} = \frac{|V|}{\prod_{m=i}^l p_m^{j_m}}\le \frac{|V|}{\lceil\sqrt{|V|/r}\rceil} \le \sqrt{|V|\cdot r}.
$$ 
Thus,

$$
\begin{aligned} 
 |W_1| &= \prod_{m=1}^{i-1}p_m^{j_m} \cdot b \\
 &= \prod_{m=1}^{i-1}p_m^{j_m}\cdot \lceil \frac{p_i^{j_i}}{a} \rceil \\
 &\le\prod_{m=1}^{i-1}p_m^{j_m}+\prod_{m=1}^{i-1}p_m^{j_m} \cdot \frac{p_i^{j_i}}{a} \\
&\le \lceil\sqrt{|V|\cdot r}\rceil+\frac{|V|}{p_i^{j_i} \prod_{m=i+1}^l p_m^{j_m}} \cdot \frac{p_i^{j_i}}{a} \\
&\le \lceil\sqrt{|V|\cdot r}\rceil
+ \frac{|V|}{p_i^{j_i} \frac{\lceil\sqrt{|V|/r\rceil}}{a}} \cdot \frac{p_i^{j_i}}{a} \\
%&\le \sqrt{|V|\cdot r}+
%&\le \frac{|V|}{p_i^{j_i} \frac{\sqrt{|V|/r}}{a}} \cdot (1+\frac{p_i^{j_i}}{a}) \\
%&\le \prod_{m=1}^{i-1}p_m^{j_m}+\sqrt{|V|\cdot r} \\
&\le 2\lceil\sqrt{|V|\cdot r}\rceil
\end{aligned}
$$
and 
$$
\begin{aligned}
W_2 &= a \prod_{m=i+1}^l p_m^ {j_m} \\
&\le \frac{\lceil\sqrt{|V|/r}\rceil}{\prod_{m=i+1}^l p_m^{j_m}}\prod_{m=i+1}^l p_m^ {j_m} \\
&= \lceil\sqrt{|V|/r}\rceil. \\
%&< \lceil\sqrt{|V|/r}\rceil+\lceil \sqrt{|V|/r} \rceil \\ 
%&< 2\lceil\sqrt{|V|/r}\rceil.
\end{aligned}
$$
Moreover, since $ab \ge p_i^{j_i}$, we have $\{0,1,...,b-1\}-\{0,b,...,-(a-1)b\} = \{0,1,...,ab-1\} \equiv \mathbb{Z}_{p_i^{j_i}}$ $(mod\ p_i^{j_i})$. Thus, $W_1 - W_2 = V$.

\subsection{Decision version}
\label{sec:general_case_decision}
%The algorithm of the decision version of Abelian hidden subgroup problem is as follows. 
%We first give the following lemma:
%\begin{Lemma}
%Suppose $G = G_1 \times G_2$, $H$ is a subgroup of $G$. If $H_1 \times \{0\} = (G_1 \times \{0\}) \cap H$, $V_1$ is a complement subgroup of $H_1$ in $G_1$, and $H_2 = (V_1 \times G_2) \cap H$, then $H = H_1 \times \{0\} + H_2$. 
%\end{Lemma}
%\begin{proof}
%Since $V_1$ is a complement subgroup of $H_1$ in $G_1$, we have $V_1 + H_1 = G_1$. Thus, $H_1 \times \{0\}+H_2 = (G_1 \times \{0\}) \cap H + (V_1 \times G_2) \cap H = ((G_1+V_1)\times G_2) \cap H = H$.
%\end{proof}
 
%We use the following algorithm to complete the following task: given $V_1$, find $H_2$:
% \begin{algorithm}
%		\caption{Find $H$ in $\mathbb{Z}_{p^k}$}
%		\label{al:main}
%		\begin{algorithmic}[1]
			%\REQUIRE $d \in \{0,...,k-1\}$;
			%\ENSURE subgroup $S$;
%			\STATE i = 1;
%			\STATE Query $W_1 \times \{0\}$, $W_2 \times \{0\}$ such that $|W_1| = |W_2|$ and $V_1 = W_1 + W_2$;
			%\STATE $(A,S_1) \gets findGroup(\{0^n\},\{0^n\},n-k-d)$;
			%\STATE $(B,S_2) \gets findGroup(A,S_1,d)$;
			%\STATE $V \leftarrow A + B$, $W \leftarrow \overline{ V + S_2 }$, find a basis $\{w_i\}$ of $W$;
%			\FOR{i=1 \to k-1}
%			\STATE Query $W_2 \times \{p^i\}$;
%			\IF{there exists $x \in W_1 \times \{0\}$, $y \in W_2 \times \{p^i\}$ such that $f(x) = f(y)$}
%			\RETURN $\langle x-y \rangle$;
%			\ENDFOR
%			\RETURN $\{0\}$;
%		\end{algorithmic}
%	\end{algorithm}
%The query complexity of this algorithm is $\sqrt{|V_1|} \cdot \log \frac{|G_2|}{H_2}$.

In the following, we give Algorithm \ref{al:main1} to solve the decision version of Abelian hidden subgroup problem. %For the simplicity of algorithm description, let $G \cong \{0\} \times \mathbb{Z}_{p_1^{k_1}} \times \cdots \times \mathbb{Z}_{p_l^{k_l}}$, where $p_i$'s are primes, $k_i$'s are positive intergars. 
%Let $G_i = \{0\} \times \mathbb{Z}_{p_1^{k_1}} \times \mathbb{Z}_{p_2^{k_2}} \times \cdots \times \mathbb{Z}_{p_i^{k_i}}$.
The idea  is to verify whether $G_i$ has nonzero elements in $H$ for $1 \le i \le l$. If for some $i$ the answer is yes, then $H$ is nontrivial; otherwise, $H$ is trivial. %For this purpose, it suffices to check whether $G_{i-1} \times \{0,p_i^0,...,p_i^{k_i-1}\}$ have nonzero elements.  
We use the following observation: in order to verify whether a set has nonzero elements in $H$, it suffices to check whether a generating pair of the set have collisions.
%If $W_1,W_2$ is a generating pair of a set $$. If there exist $x \in W_1, y \in W_2$ such that $f(x) = f(y)$, then  

%The idea of the algorithm is to query all the elements of a generating pair $W_1,W_2$ of $G_i$ to verify whether $G_i$ have nonzero elements in $H$ for $1 \le i \le l$. If for some $i$ the answer is yes, then $H$ is nontrivial; otherwise, $H$ is trivial.  
%, and $p_1^{k_1} \le \cdots \le p_l^{k_l}$.
\begin{algorithm}
	\caption{The algorithm for the decision version}
	\label{al:main1}
	\begin{algorithmic}[1]
	\REQUIRE $G \cong \{0\} \times \mathbb{Z}_{p_1^{k_1}} \times \cdots \times \mathbb{Z}_{p_l^{k_l}}$, where $p_i$'s are primes and $k_i \ge 1$ for any $i$;
	\ENSURE whether $H$ is trivial or not;
		\STATE $W_1 = W_2 = \{0\}$; \label{main1_init}
		\FOR{$i=1 \to l$} \label{main1_for1}
			\STATE Query all not queried elements in $W_1 \times \{0\}$, $W_2 \times \{p_i^{k_i-1}\}$; \label{main1_query}
				\IF{there exist $x \in W_1 \times \{0\}$, $y \in W_2 \times \{p_i^{k_i-1}\}$ s.t. $f(x) = f(y)$} \label{main1_if}
					\RETURN false;
				\ENDIF 
   %   \STATE $V = V \times \{\mathbb{Z}_{p_i^{k_i}}\}$;
      \STATE $W_1,W_2 \leftarrow$ findPair$(G_i,1)$; \label{main1_update1}
      %\STATE $q = \lceil \sqrt{p_i^{k_i}} \rceil$; %\label{main1_update1}
		 	%\STATE $W_1 \leftarrow W_1 \times \{0,1,...,q-1\}$, $W_2 \leftarrow W_2 \times \{0,-q,-2q,...,-(q-1)q\}$; \label{main1_update2}
		 \ENDFOR
		 \RETURN true;
		\end{algorithmic}
	\end{algorithm}

Specifically, the process of Algorithm \ref{al:main1} is as follows. In Step \ref{main1_init}, we initialize $W_1 = W_2 = \{0\}$. Then we go into the loop. We query all the elements  not queried in $W_1 \times \{0\}$ and $W_2 \times \{p_i^{k_i-1}\}$. If there exist $x \in W_1 \times \{0\}$, $y \in W_2 \times \{p_i^{k_i-1}\}$ such that $f(x) = f(y)$, then the algorithm returns false; otherwise we update the values of $W_1,W_2$ and go to the next iteration. After performing $l$ iterations, if the algorithm does not return false, then it returns true.  

We first analyze the correctness of Algorithm \ref{al:main1}. Let $d = \min_{h \in H} I_h$. We discuss the following two cases: i) $H$ is trivial. In this case, $f(x)\not= f(y)$ for any $x \neq y$, and thus Algorithm \ref{al:main1} does not return false at each iteration. Finally, Algorithm \ref{al:main1} returns true. ii) $H$ is non-trivial. In this case, if the algorithm is over after no more than $d-1$ iterations of the loop, then the algorithm must return false, which is a correct result; if the algorithm is not over after $d-1$ iterations, then $W_1 - W_2 = G_{d-1}$. Since $d = \min_{h \in H} I_h$, there exists $h \in H$ such that $h = (0,h_1,...,h_d,0,...,0)$, where $h_d \neq 0$. Then there exists $b$ such that $bh_d = p_d^{k_d-1}$ by Lemma \ref{lemma:prime}, i.e., $bh = (0,bh_1,...,bh_{d-1},p_d^{k_d-1},0,...,0)$. Thus, 
$$
bh \in G_{d-1} \times \{p_d^{k_d-1}\} = W_1\times \{0\}-W_2\times \{p_d^{k_d-1}\}.
$$
Since $bh \in H$, for some $x, y$ satisfying that $x-y = bh$, we will find $f(x) = f(y)$ in Step \ref{main1_if} of the $d$-th iteration. Hence, Algorithm \ref{al:main1} will return false. Totally, Algorithm \ref{al:main1} always outputs the correct result.

\begin{Lemma}
\label{lemma:prime}
For any nonzero element $h \in \mathbb{Z}_{p^k}$, there exists an element $b \in \mathbb{Z}_{p^k}$ such that $bh \equiv p^{k-1} (mod\ p^k)$.
\end{Lemma}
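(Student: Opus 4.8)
The plan is to reduce the statement to the structure of the cyclic subgroups of $\mathbb{Z}_{p^k}$ recorded in the Fact above. First I would write the nonzero element $h$ in the normalized form $h = p^s u$, where $0 \le s \le k-1$ and $\gcd(u,p)=1$. Such an $s$ exists by extracting the largest power of $p$ dividing $h$, and crucially $s \le k-1$ rather than $s = k$, precisely because $h \not\equiv 0 \pmod{p^k}$; this is the one place where the hypothesis $h \neq 0$ is used.

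The key observation is that $u$ is a unit modulo $p^k$, so multiplication by $u$ is a bijection that fixes the cyclic subgroup $\langle p^s \rangle$ setwise. Hence $\langle h \rangle = \langle p^s u \rangle = \langle p^s \rangle$. Since $s \le k-1$, the target element satisfies $p^{k-1} = p^{k-1-s}\cdot p^s$, which is an integer multiple of $p^s$, and therefore $p^{k-1} \in \langle p^s \rangle = \langle h \rangle$. By the definition of the cyclic subgroup generated by $h$ in the additive group $\mathbb{Z}_{p^k}$, the membership $p^{k-1} \in \langle h \rangle$ is exactly the assertion that there is an integer $b$ with $bh \equiv p^{k-1} \pmod{p^k}$, which is the claim.

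If one prefers an explicit witness rather than the subgroup argument, I would instead note that $bh \equiv p^{k-1} \pmod{p^k}$ with $h = p^s u$ is equivalent, after cancelling the common factor $p^s$, to the congruence $bu \equiv p^{k-1-s} \pmod{p^{k-s}}$. As $u$ is invertible modulo $p^{k-s}$, one simply takes $b \equiv u^{-1} p^{k-1-s} \pmod{p^{k-s}}$ and lifts it to $\mathbb{Z}_{p^k}$, and a direct check recovers $bh \equiv p^s \cdot p^{k-1-s} = p^{k-1} \pmod{p^k}$.

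I do not expect any serious obstacle here; the only point requiring care is the bookkeeping that $s \le k-1$ (so that $p^{k-1-s}$ is a genuine nonnegative power and the cancellation of $p^s$ is legitimate), together with the standard fact that multiplying a generator by a unit yields the same cyclic subgroup. Everything else is a routine unwinding of definitions, so the write-up should be short.
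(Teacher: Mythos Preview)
Your proof is correct. The approach---extracting the $p$-adic valuation $h = p^{s}u$ with $u$ a unit and then observing that $p^{k-1} \in \langle p^{s}\rangle = \langle h\rangle$---is close in spirit to the paper's, but more careful. The paper instead sets $h' = h \bmod p$, inverts $h'$ modulo $p$ to obtain some $a$, and takes $b = ap^{k-1}$; this tacitly assumes $h' \in \{1,\dots,p-1\}$, i.e.\ $\gcd(h,p)=1$, and so as written does not cover the case $p \mid h$. Your decomposition handles every nonzero $h$ uniformly, and your explicit witness $b \equiv u^{-1}p^{k-1-s}$ specializes to the paper's formula when $s=0$. So the underlying idea is the same (invert the unit part, multiply by the appropriate power of $p$), but your version is the complete one.
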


\begin{proof}
Let $h' = h\mod p$. Since $h' \in \{1,...,p-1\}$, there exists $a \in \mathbb{Z}_p$ such that $ah' \equiv 1 (mod\ p)$. Thus, $ah'p^{k-1} \equiv p^{k-1} (mod\ p^k)$. Let $b \equiv ap^{k-1} (mod\ p^k)$. %Since $h'p^{k-1} \equiv p^{k-1}h (mod\ p^k)$, 
We have $bh \equiv ap^{k-1}h\equiv ap^{k-1}h' \equiv ah'p^{k-1} \equiv p^{k-1} (mod\ p^k)$.
%$\{0,p^{k-1},...,(p-1)p^{k-1}\}$ is isomorphic to $\mathbb{Z}_p$. Moreover, $p^{k-1}h \in \{0,p^{k-1},...,(p-1)p^{k-1}\}$. If $h \neq p$, then $$ 
\end{proof}

Now we analyze the query complexity of Algorithm \ref{al:main1}. In the following, since Step \ref{main1_query} is the only step to makes queries, it suffices to compute the values of $|W_1|, |W_2|$ in Step \ref{main1_query}. Similarly, we discuss the following two cases:

%For any non-zero element $g \in G$, we let $I_g$ denote the minimum non-zero coordinate. 
i) $H$ is nontrivial. Since $H_i = \{h\in H|I_h \le i\}$, we have $H_i = H \cap G_i$. And because $H_i \subseteq H$ and for any $h_1,h_2\in H_i$ we have $h_1+h_2 \in H_i$, $H_i$ is a subgroup of $H$. %In the following, we will prove $|H_i| \le p_i^{k_i} |H_{i-1}|$ by Lemma \ref{lemma:only}.%there exists $h \in H_i$ such that $H_i = H_{i-1} + \langle h \rangle$. Thus, $|H_i| \le p_i^{k_i}|H_{i-1}|$.
Since $d = \min_{h \in H} I_h$, we have $|H_{d-1}| =1$. By Lemma \ref{lemma:only}, we have $|H| \le \prod_{i=d}^l p_i^{k_i}$. Additionally, Algorithm \ref{al:main1} will go over in the first $d$ iterations during the above correctness proof. Since $G_i = \{0\} \times \mathbb{Z}_{p_1^{k_1}} \times \cdots \mathbb{Z}_{p_i^{k_i}}$, we have $i \le \log |G_i|$ and $|G_{i+1}|/|G_{i}| = p_{i+1}^{k_{i+1}} \ge 2$. Thus, 
\begin{equation}
\label{eqratio}
\begin{aligned}
\sum_{m=0}^i \sqrt{|G_i|} \le \sqrt{|G_i|}(1+\frac{1}{\sqrt{2}}+\cdots \frac{1}{\sqrt{2^{i-1}}}) \le \frac{1}{1-1/\sqrt{2}}\sqrt{|G_i|}.  
\end{aligned}
\end{equation}
%$\sum_{i=1}^d |G_{i}| \le 2|G_d|$.  
$W_1,W_2$ are updated by calling Algorithm \ref{al:subroutine} in Step \ref{main1_update1}. Note that after each iteration, $W_1$ is non-decrease. Additionally, $|W_1|\le 2( \sqrt{|G_{i}|}+1), |W_2| \le \sqrt{|G_{i}|}+1$ after running the $i$-th iteration. Thus, 
the query complexity of Algorithm \ref{al:main1} is at most
$$
\begin{aligned}
& 2(\sqrt{|G_{d-1}|} +1) + \sum_{i=0}^{d-1} (\sqrt{|G_{i}|}+1) \\ &\le 2(\sqrt{|G_{d-1}|}+1)+\frac{1}{1-1/\sqrt{2}}\sqrt{|G_{d-1}|}+d \\
&= O(\sqrt{|G_{d-1}|}).
\end{aligned}
$$ 

%Additionally, since $i = \min_{h \in H} I_h$, we have $|H| \le p_i^{k_i}\cdots p_l^{k_l}$. 
Since $|G| = \prod_{i=1}^l p_i^{k_i}, |G_{d-1}| = \prod_{i=1}^{d-1} p_i^{k_i}$, we have $|G_{d-1}| \le \frac{|G|}{|H|}$. Thus, the query complexity of Algorithm \ref{al:main1} is $O(\sqrt{\frac{|G|}{|H|}})$.

ii) $H$ is trivial. In this case, Algorithm \ref{al:main1} will perform all $l$ iterations. Similiar to i), we can obtain the query complexity is at most $O(\sqrt{|G_{l-1}|})$. Since $|H| = 1$, we have $O(\sqrt{|G_{l-1}|}) = O(\sqrt{|G|}) = O(\sqrt{\frac{|G|}{|H|}})$.
\begin{Lemma}\label{lemma:only}
If $H_i \neq H_{i-1}$, then there exist $0 \le j < k_i, h \in H_i$ such that $h_i = p_i^j$ and $H_i = H_{i-1}+\langle h \rangle$. Additionally, $|H_i| \le p_i^{k_i-j} |H_{i-1}|$.
%$H_i = H_{i-1} + \langle h \rangle$ for some $h \in H_i$.
%If $H_i \neq H_{i-1}$, there exists $h \in H_i$ such that $h_i = p_i^j$ for some $0 \le j \le k_i-1$ and $H_i = H_{i-1}+\langle h \rangle$. 
\end{Lemma}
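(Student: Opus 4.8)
The plan is to analyze the projection of $H_i$ onto its $i$-th coordinate. Every element of $H_i$ satisfies $I_h \le i$, i.e.\ it has the form $h = (0, h_1, \ldots, h_i, 0, \ldots, 0)$ with support confined to coordinates $1$ through $i$, so I would define the map $\pi_i : H_i \to \mathbb{Z}_{p_i^{k_i}}$ by $\pi_i(h) = h_i$. This is a group homomorphism; its image $\pi_i(H_i)$ is a subgroup of $\mathbb{Z}_{p_i^{k_i}}$, and its kernel is $\{h \in H_i : h_i = 0\}$, which I claim equals $H_{i-1}$. Indeed, for $h \in H_i$ we already have $h_m = 0$ for all $m > i$, so $h_i = 0$ is equivalent to $I_h \le i-1$, i.e.\ to $h \in H_{i-1}$.

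First I would invoke the \textbf{Fact} that every subgroup of $\mathbb{Z}_{p_i^{k_i}}$ equals $\langle p_i^j \rangle$ for some $0 \le j \le k_i$. The hypothesis $H_i \neq H_{i-1}$ says the kernel of $\pi_i$ is proper, so $\pi_i(H_i) \neq \{0\}$, which forces $0 \le j < k_i$ (the value $j = k_i$ would give the trivial subgroup). Since $p_i^j$ generates $\pi_i(H_i)$ it lies in the image, so I can choose $h \in H_i$ with $h_i = p_i^j$, establishing the first assertion.

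Next I would prove the decomposition $H_i = H_{i-1} + \langle h \rangle$. The inclusion $\supseteq$ is immediate because $H_{i-1} \subseteq H_i$, $\langle h \rangle \subseteq H_i$, and $H_i$ is a subgroup. For $\subseteq$, take any $g \in H_i$; then $g_i = \pi_i(g) \in \langle p_i^j \rangle$, so $g_i = m\,p_i^j = m\,h_i$ for some integer $m$. Hence $\pi_i(g - mh) = g_i - m h_i = 0$, so $g - mh \in H_{i-1}$, and therefore $g = (g - mh) + mh \in H_{i-1} + \langle h \rangle$.

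Finally, the size bound follows from the first isomorphism theorem applied to $\pi_i$: we get $H_i / H_{i-1} \cong \pi_i(H_i) = \langle p_i^j \rangle$, and the order of $p_i^j$ in $\mathbb{Z}_{p_i^{k_i}}$ is $p_i^{k_i}/p_i^j = p_i^{k_i - j}$. Thus $[H_i : H_{i-1}] = p_i^{k_i-j}$, giving $|H_i| = p_i^{k_i - j}\,|H_{i-1}|$, which in particular yields the claimed inequality $|H_i| \le p_i^{k_i - j}\,|H_{i-1}|$. I do not expect a serious obstacle here, as this is the standard projection-and-coset argument for subgroups of finite Abelian groups; the only point needing care is the identification of $\ker \pi_i$ with $H_{i-1}$ exactly (rather than merely a containment), which is why I would state the equivalence of $h_i = 0$ and $I_h \le i-1$ within $H_i$ at the outset.
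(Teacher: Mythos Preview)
Your proof is correct and follows essentially the same approach as the paper's: both arguments project $H_i$ onto its $i$-th coordinate, identify the image as $\langle p_i^{\,j}\rangle$, choose a preimage $h$ of $p_i^{\,j}$, and then show every element of $H_i$ differs from a multiple of $h$ by an element of $H_{i-1}$. Your packaging via the homomorphism $\pi_i$ and the first isomorphism theorem is cleaner than the paper's hands-on coset enumeration, and in fact yields the equality $|H_i| = p_i^{k_i-j}\,|H_{i-1}|$ rather than merely the stated inequality.
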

\begin{proof}
Suppose $j$ is the minimum integer such that $h_i = cp_i^j$ for any $h \in H_i$. Then there exists $h$ such that $h_i = p_i^j$. For any $h' \in H_i\setminus H_{i-1}$, w.l.o.g., let $h'_i = cp_i^j$. If $h' \neq H_{i-1} + \langle h \rangle$, then $h' - ch \notin H_{i-1}$. On the other hand, since $h' \in H, ch \in H$, we have $h'-ch \in H$. Additionally, $h'-ch \in G_{i-1}$. Thus, $h'-ch \in G_{i-1} \cap H = H_{i-1}$, which leads to a contradiction. Thus, for any $h' \in H_i\setminus H_{i-1}$, we have $h' = H_i + \langle h \rangle$. Thus, $$
H_i = H_{i-1} + \langle h \rangle =\{H_{i-1}+ \alpha h| \alpha \in \mathbb{Z}_{p_i^{k_i}}\},
$$
Since $H_{i-1} = \{a \in H|I_a \le i-1\}$ and $h_i = p_i^j$, for any $\alpha$ satisfying  $ p_i^{k_i-j}| \alpha$, we have $\alpha h \in H_{i-1}$. Thus, 
$H_{i-1}+\alpha_1 h = H_{i-1}+\alpha_2 h$ iff $p_i^{k_i-j} | \alpha_1-\alpha_2$. Hence, 
$$
\begin{aligned}
H_i &= H_{i-1}+\langle h \rangle \\
&= \{H_{i-1}+\alpha h|\alpha \in \mathbb{Z}_{p_i^{k_i}}\} \\
&= \{H_{i-1}+\alpha h|\alpha \in \mathbb{Z}_{p_i^{k_i-j}}\}, \\
\end{aligned}
$$ 
which implies $|H_i| \le p_i^{k_i-j} |H_{i-1}|$.
\end{proof}

\subsection{Identification version}\label{sec:general_case_identification}
We %first give a subroutine (i.e., Algorithm \ref{al:subroutine}) used in Algorithm \ref{al:main2}, and then 
give Algorithm \ref{al:main2} to solve the identification version of $\mathsf{HSP}$. Similar to Algorithm \ref{al:main1}, the idea of Algorithm \ref{al:main2} is to find $H_i$ for $1 \le i \le l$ by generating pairs. Meanwhile, in each iteration, the algorithm updates $V$ satisfying that $H_i +V = G_i$ to serve for the next iteration. Finally, the algorithm returns $H_l = H$.
\begin{algorithm}
	\caption{The algorithm for the identification version}
	\label{al:main2}
	\begin{algorithmic}[1]
		\REQUIRE $G = \{0\} \times \mathbb{Z}_{p_1^{k_1}} \times \mathbb{Z}_{p_2^{k_2}} \times \cdots \times \mathbb{Z}_{p_l^{k_l}}$, where $p_i$'s are primes and $k_i \ge 1$ for any $i$;
		\ENSURE The hidden subgroup $H$;
		\STATE $V = W_1 = W_2 = H' = \{0\}, r = 0$; \label{main2_init}
		\FOR{$i=1 \to l$} \label{main2_for1}
			\STATE Query all not queried elements in $W_1 \times \{0\}$; \label{main2_query1}
			\STATE $H' \leftarrow H' \times \{0\}$, $t_i = k_i$; \label{main2_init_t}
			\FOR{$j=0 \to k_i-1$}\label{main2_for2}
			\STATE Query the elements in $W_2 \times \{p_i^j\}$;\label{main2_query2}
				\IF{there exist $x \in W_1 \times \{0\}$, $y \in W_2 \times \{p_i^j\}$ such that $f(x) = f(y)$} \label{main2_if}
				\STATE $H' \leftarrow H'+\langle x-y \rangle$, $t_i = j$; \label{main2_update_H}
					\IF{$j = 0$}
		\STATE $r = r+1$; \label{main2_update_r}
				\ENDIF
				\STATE \textbf{break};
				\ENDIF
			\ENDFOR
			\STATE $V \leftarrow V \times \mathbb{Z}_{p_i^{t_i}}$; \label{main2_update_V}
			\STATE $(W_1,W_2) \leftarrow \text{findPair}(V, \max\{1,r\}$); \label{main2_update_W}
		 \ENDFOR
		 \RETURN $H'$;
		\end{algorithmic}
	\end{algorithm}

In the following, we first describe the process of Algorithm \ref{al:main2}, and then analyze its correctness and query complexity. In Step \ref{main2_init}, we first initialize $V = W_1= W_2 = H' = \{0\}$ and $r =0$. Then we jump into the outer loop. We first query all the elements not queried  in $W_1 \times \{0\}$ in Step \ref{main2_query1} and initialize $t_i = k_i$ in Step \ref{main2_init_t}.  
Then we jump into the inner loop. In Step \ref{main2_query2}, we query the elements in $W_2 \times \{p_i^j\}$. If there exist some collisions, then we do the following things: i) update $H$ and set $t_i=j$; ii) if $j = 0$, let $r=r+1$. iii) jump out the inner loop. Then we update $V,W_1,W_2$ in Step \ref{main2_update_V} and Step \ref{main2_update_W}.  
 
 The correctness analysis of Algorithm \ref{al:main2} is as follows. Note that we always have $V = W_1-W_2$ in the algorithm procedure by calling Algorithm \ref{al:subroutine}. In the following, for the sake of clarity, let $V_0,H'_0,r_0$ be the initial values of $V,H',r$ in Algorithm \ref{al:main2}, and $V_i,H'_i,r_i$ represent the values of $V,H',r$ after running the $i$-th iteration of the outer loop ($1 \le i \le l$). Let $V,r$ be the final values, i.e., $V= V_l$ and $r = r_l$. 
%Let $H_i = \{h\in H| I_h \le i\}$.

We first prove $V_i + H'_i = G_i$ by induction. i) $i= 0$, $H'_i = V_i = G_i = \{0\}$, so $V_i + H'_i = G_i$. ii) suppose $V_m+H'_m = G_m$ for any $m < i$. If we do not find collisions in the inner loop, then $H'_i = H'_{i-1} \times \{0\}$, $V_i = V_{i-1} \times \mathbb{Z}_{p_i^{k_i}}$, and thus 
$$
H'_i + V_i = G_{i-1}\times \mathbb{Z}_{p_i^{k_i}} = G_i.
$$ 
If there exist collisions in the $j$-th iteration of the inner loop, since $x \in W_1 \times \{0\}, y\in W_1 \times \{p_i^j\}$ and $W_1-W_2 = V_{i-1}$, we have $x-y \in V_{i-1} \times \{p_i^j\}$. Thus, $H'_i = H'_{i-1} \times \{0\}+ \langle x-y \rangle$, $V_i = V_{i-1}\times \mathbb{Z}_{p_i^j}$. Since $\langle p_i^j \rangle + \mathbb{Z}_{p_i^j} \equiv \mathbb{Z}_{p_i^{k_i}} (\text{mod } p_i^{k_i})$, we have $H'_i+V_i = G_{i-1} \times \mathbb{Z}_{p_i^{k_i}} = G_i$.
 
%We pick maximum $j$ such that for any $h \in H_i/H_{i-1}$, there exist integer $c$ such $h_i = cp_i^j$. Next we will prove the algorithm will find $x,y$ such that $f(x) = f(y)$ in the $j$-th iteration. 

Then we will prove $H'_i = H_i$ by induction. %Moreover, $H$ we obtain in Step \ref{main2_update_H} is just $H_i$. We will use the induction. 
First, $i=0$, $H'_i = \{0\} = H_i$. Second, suppose $H'_m = H_{m}$ for any $m < i$. Then we go into the $i$-th iteration of the outer loop. i) If we do not find collisions in the inner loop, then $H'_i = H'_{i-1} = H_{i-1}$. By Lemma \ref{lemma:only}, if $H_i \neq H_{i-1}$, then there exists $h \in H_i$ such that $h_i = p_i^j$ for some $j$ and $H_i = H_{i-1} + \langle h \rangle$. Additionally, since $W_1 \times \{0\}-W_2 \times \{p_i^j\} = V_{i-1} \times \{p_i^j\}$ in the $j$-th iteration of the inner loop, %we have 
%\begin{equation}\label{eq:disjoint}
%H_i \cap (V_{i-1} \times \{p_i^0,...,p_i^{k_i-1}\}) = \emptyset. 
%\end{equation}
we have $v \notin H$ for any $v \in V_{i-1} \times \{p_i^j\}$. Due to 
$$
H_{i-1}+V_{i-1} = H'_{i-1}+V_{i-1} = G_{i-1},
$$
for any $g \in G_{i-1} \times \{p_i^j\}$, there exist $h \in H_{i-1},v \in V_{i-1} \times \{p_i^j\}$ such that $g = h+v$. Since $h \in H, v \notin H$, we have $g \notin H$. Thus, there exists no such $h \in H_i$ that $h_i = p_i^j$, which implies $H_i = H_{i-1}$, i.e., $H'_i = H_i$.
 ii) If we find collisions in the $j$-th iteration of the inner loop, then let $h = x-y = (0,h_1,...,h_{i-1},p_i^j,0,...,0)$. Since $H'_{i} = H'_{i-1}+\langle h \rangle$ and $H'_{i-1} = H_{i-1}$, it suffices to prove $H_{i-1}+\langle h \rangle = H_i$. %By Lemma \ref{lemma:only}, if $h_i = p_i^j$ and $j$ is the minimum integer such that for any $h' \in H$ $h'_i = cp_i^j$, then $H_i = H_{i-1}+\langle h \rangle$. Thus, $x-y$ is such $h$.
 %For any $s \in H_{i-1}+\langle h \rangle$, we have $s \in H_i$. Thus, 
%\begin{equation}
%\label{eq:left}
%H_{i-1}+\langle h \rangle \subseteq H_i.
%\end{equation} 
%If for some $h \in H_i/H_{i-1}$, there dexists 
For $h' \in H_i/H_{i-1}$, suppose there exists no integer $c$ such that $h'_i = cp_i^j$. Without loss of generality, we assume $h'_i = cp_i^{j'}$, where $j' < j$ and $gcd(c,p_i) = 1$. Then there exists an integer $x$ such that $xh' \equiv 1 (mod\ p_i^{k_i})$, which means $(xh')_i = p_i^{j'}$. Thus, we will find collisions in the $j'$-th iteration, which leads to a contradiction. Hence, for any $h' \in H_i/H_{i-1}$, there exists an integer $c$ such that $h'_i = cp_i^{j}$. % which means there exist $a \in G_{i-1}$ such that $h' = a+ch$. Since $h,h'\in H$, we have $a\in H$. Thus, $a\in G_i \cap H_i$, i.e., $a = H_{i-1}$. 
By the proof of Lemma \ref{lemma:only}, we have $H_i = H_{i-1}+\langle h \rangle$. Thus, $H'_i = H_i$.
%$h' \in H_{i-1}+\langle h \rangle$. Thus, $H_i/H_{i-1} \subseteq H_{i-1}+\langle h \rangle$, which means
%\begin{equation}
%\label{eq:right}
%H_i \subseteq H_{i-1}+\langle h \rangle. 
%\end{equation}
%Combing Eq. \ref{eq:left} and Eq. \ref{eq:right}, we have $H_{i-1}+\langle h \rangle = H_i$. 
 Totally, after $l$ iterations, Algorithm \ref{al:main2} returns $H'_l = H_l = H$, i.e., Algorithm \ref{al:main2} returns the correct result.

 Next, we analyze the query complexity of Algorithm \ref{al:main2}. Note that Step \ref{main2_query1} and \ref{main2_query2} are the only steps to make queries. 
 %We denote the number of queries in Step \ref{main2_query1} by $a_i$, and the number of queries in Step \ref{main2_query2} by $b_{i}$. 
For $1 \le i \le l$, the running time of Step \ref{main2_query2} in the $i$-th iteration is at most $t_i+1$; let $a_i = |W_1|$ and $b_i = |W_2|$ when running the $i$-th iteration of the outer loop. Let $a_0 = 0$.  
It is worth noting that $W_1$ is non-decrease after each iteration. Since $W_1$ and $W_1 \times \{0\}$ are actually the same set, we essentially query all the elements not queried  in $W_1$ in Step \ref{main2_query1} each time. Thus, in the $i$-th iteration, the number of queries in Step \ref{main2_query1} is $a_{i}-a_{i-1}$; also, the number of queries in Step \ref{main2_query2} is at most $(t_{i}+1) b_{i}$. Thus, the total number of queries is at most $a_{l}+ \sum_{i=1}^l (t_{i}+1) b_{i}$. %Let $H_i = \{h| h \in H, I_h = i\}$. Let 

%Next we first prove that $H_i +V_i = G_i$ for any $i$ by the induction. First, when $i=0$, $H_i + V_i =G_i = \{0\}$. Second, suppose $H_{i-1}+V_{i-1} = G_{i-1}$. Since $V_i = V_{i-1} \times \mathbb{Z}_{p_i^{j_i}}$, $H_i = H_{i-1}+\langle h \rangle$. and $h_j = p_i^{j_i}$, we have $G_{i-1}+\langle h \rangle +V_i = G_i$. Thus, $H_i + V_i = G_i$. 
%Moreover, since $V_i = \{0\} \times \mathbb{Z}_{p_1^{j_1}} \times \cdots \mathbb{Z}_{p_i^{j_i}}$ and for any $h \in H_i\setminus H_{i-1}, $h_i = c p_i^{j_i}$ for some $c \ge 1$,  
%we have $H_i \cap V_i = \{0\}$. Thus, 

In the following, we first prove $|H_i| \le p_i^{k_i-t_i} |H_{i-1}|$. i) If $H_i = H_{i-1}$, then $t_i = k_i$ and thus $|H_i| = p_i^{k_i-t_i}|H_{i-1}|$. ii) If $H_i \neq H_{i-1}$, then there exists $h$ such that $h_i = p_i^{t_i}$ and $H_i = H_{i-1}+\langle h \rangle$. By Lemma \ref{lemma:only}, we have $|H_i| \le p_i^{k_i-t_i} |H_{i-1}|$. Hence,
%$H_i = \{h'+\alpha h|h' \in H_{i-1},\alpha \in \mathbb{Z}_{p_i^{k_i}}\}$. For any
$|H_i| \le p_i^{k_i-t_i} |H_{i-1}| \le \prod_{m=1}^i p_m^{k_m-t_m}$. Since $|V_i| = \prod_{m=1}^i p_m^{t_m}$, we hace $|H_i|\cdot |V_i| = \prod_{m=1}^i p_m^{k_m} = |G_i|$. Thus, $|V| = |V_l| = \frac{|G_l|}{|H_l|} = \frac{|G|}{|H|}$. 

Let $S_1 = \{i| r_{i} = r_{i-1}, 1\le i \le l\}$ and $S_2 = \{i | r_{i} = r_{i-1}+1\}$. Then $|S_1| = l-r$ and $|S_2| = r$. Moreover, for $i \in S_1$, we have $t_i \ge 1$ and 
$$
b_i \le \lceil\sqrt{|V_i|/\max\{1,r\}}\rceil \le \lceil\sqrt{V_i}\rceil \le \sqrt{V_i}+1;
$$
for $i \in S_2$, we have $t_i = 0$.  
Similar to Eq. \eqref{eqratio}, we have $\sum_{m=1}^i \sqrt{|V_m|} \le \frac{1}{1-1/\sqrt{2}}\sqrt{|V_i|}$. Since $t_i^2 \le 2^{t_i+2} \le 4\cdot p_i^{t_i}$ for $t_i \ge 1$, we have $t_i \le 2 \frac{\sqrt{|V_i|}}{\sqrt{|V_{i-1}|}}$. So $t_i \sqrt{|V_{i-1}|} \le 2\sqrt{|V_i|}$.
%Since $t_m \ge 1$ for any $i \in S_1$, we have $V_{i_1} \le \frac{1}{p}V_{i_2}$ for any $i_1 < i_2$ ($i_1,i_2 \in S_1$), where $p = \min_{i \in S_1}p_i$. 
%Since $n_i = \log \frac{p_i^{k_i}}{|D_i|} < \frac{p_i^{k_i}}{|D_i|}$, we have $t_i b_{i-1} < b_i$. 
 %Let $b = \max_{i\in S_1}{b_i}$. For any $i$, we have $b_i \le 2\frac{|V_l|}{r_i}+1 \le 2|V_l|+1$.
%The
Since
$$
|V| = \prod_{i=1}^l p_m^{t_m}, 
$$
we have
%$$
%\log |V_l| = \sum_{i=1}^l t_m \log p_m, 
%$$
$$
\sum_{i=1}^l t_i \le \sum_{i=1}^l t_m \log p_m =
\log |V|,
$$
which means $\sum_{i=1}^l t_i \le \log_p |V|$.
Moreover, since 
$$
|V| = \prod_{i=1}^l p_m^{t_m} \ge \prod_{i\in S_1}p_m^{t_m} \ge \prod_{i\in S_1} p_m \ge 2^{l-r},
$$
we have $l-r \le \log |V|$. Thus, we have 
$$
\begin{aligned}
\sum_{i \in S_1}(t_{i}+1)b_{i} 
&\le \sum_{i \in S_1}(t_{i}+1)(\sqrt{|V_{i-1}|}+1) \\
&= \sum_{i=1}^l (t_{i}\sqrt{|V_{i-1}|}+\sqrt{|V_{i-1}|})+\sum_{i=1}^l t_{i+1}+l-r \\
&\le \sum_{i=1}^l (2\sqrt{|V_i|}+\sqrt{|V_{i-1}|})+2\log |V| \\
 &\le \sum_{i=1}^l 3 \sqrt{|V_i|}+2\log |V| \\
 &\le \frac{3}{1-1/\sqrt{2}}\sqrt{|V|}+2\log |V| \\
 &= O(\sqrt{|V|}),
\end{aligned}
$$
$$
\begin{aligned}
\sum_{i \in S_2}(t_i+1)b_{i} &= \sum_{i \in S_2}b_{i} \\
&\le \sum_{i \in S_2}(\sqrt{\frac{\sqrt{|V|}}{\max\{1,r_{i-1}\}}}+1) \\
&< \sqrt{|V|}(1+\sum_{i=1}^{r-1} \frac{1}{\sqrt{i}})+r \\
&< \sqrt{|V|}(1+\int_{0}^{r-1}\frac{1}{\sqrt{x}}dx)+r \\
&<2\sqrt{|V|\cdot r}+r \\
&= O(r+\sqrt{|V|\cdot r}).
\end{aligned}
$$

%Thus, 
%\end{aligned}
%$$

%$$

Since $a_{l} \le 2\lceil \sqrt{|V|\cdot r}\rceil$, the total number of queries is $O(r+\sqrt{|V|\cdot r})$. Since $|H| \ge \prod_{i\in S_2} p_i^{k_i} \ge 2^r$, we have $r = O(\log |H|)$. Since $|V| = \frac{|G|}{|H|}$, the total number of queries is $O(\sqrt{\frac{|G|}{|H|}\log |H|}+\log |H|)$.
%Moreover, for any $i \in S_2$, we have $b_i \le V_max/r_i$. 
Finally, Theorem \ref{th:general_case} is implied by the above analysis of Algorithm \ref{al:main1} and \ref{al:main2}.

%Thus, for any $i_1, i_2 \in S_1$, we have $$ 
 
% It is worth to noting that we always have $|V_i||H_i| = p_1^{k_1}\cdots p_{i}^{k_i}$. Thus, we have $|V_i| = \frac{p_1^{k_1}\cdots p_{i}^{k_i}}{|H_i|}$. Then we have $n_i = m_i = \sqrt{|V_{i-1}|}$. Thus, the total number of queries is at most $V_{i-1}+\sum_{i=0}^n k_i\sqrt{|V_{i-1}|}$. Additionally, we have $k_i = \log_{p_i} |H_i|/|H_{i-1}|$. Thus, we have Since It is obvious that the number of queries is maximum when $H_i = 0$ for $i \le l-r$. Thus, we compute the number at this time. 

%We have $t_i = \log_{p_i} |H_i|$, $b_{i+1} = b_i$ Thus, $t_i b_i < $

 %\begin{equation}
 %\begin{aligned}
 %\sum_{i=0}^n k_i\sqrt{|V_{i-1}|} = \log_{p_i}
 %\end{aligned}
 %\end{equation}
%When we find the $r$-th generating element in $H$, we have $|W_1| = r|W_2|$. Thus, $|W_1| = \sqrt{|V|r}$ and $|W_2| = \sqrt{\frac{|V|}{r}}$. Since $r = O(\log |H|)$, we have the query complexity of Algorithm \ref{al:main2} is $O(\sqrt{\frac{|G|}{|H|} \log |H|})$.  

\section{A general lower bound on the query complexity of $\mathsf{HSP}$}\label{sec:lower_bound}
In this section, we give a general lower bound for $\mathsf{HSP}$ by proving Theorem \ref{th:lower_bound}.

\begin{proof}[Proof of Theorem \ref{th:lower_bound}]
We consider the identification version of the hidden subgroup problem with additional promise $(\mathsf{HSP}^+)$. In $\mathsf{HSP}^+$, suppose we have known the size of $H$ and $|X| = \frac{|G|}{|H|}$. Since any algorithm solving $\mathsf{HSP}$ also can solve $\mathsf{HSP}^+$, it suffices to obtain the lower bound on the query complexity of $\mathsf{HSP}^+$. 
%Since the hidden subgroup is $H$, $f(x)$ have $\frac{|G|}{|H|}$ different values. 
Since $\mathsf{HSP}^+$ has at least $|\mathcal{H}|$ possible outputs, where $|\mathcal{H}| = \inbrace{H' \le G||H'| = |H|}$, any decision tree solving the problem must have at least $\Omega(|\mathcal{H}|)$ leaves. Since every query has at most $\frac{|G|}{|H|}$ possible answers, the depth of the decision tree must be at least $\Omega\inparen{\log_{\frac{|G|}{|H|}} |\mathcal{H}|} = \Omega\inparen{\frac{\log |\mathcal{H}|}{\log \frac{|G|}{|H|}}}$.
%According to Yao principle, we have $R()$ 
\end{proof}

\section{Conclusion}
\label{sec:Conclusion}

In this paper, we considered deterministic algorithms to solve the hidden subgroup problem ($\mathsf{HSP}$). First, 
for $\mathsf{HSP}$ over $G = {\mathbb{Z}}_{p^k}$ with $p$ being a prime, we presented a deterministic algorithm with $2$ queries to decide whether $H$ is trivial, and an algorithm with %$O(\log \log |G|)$ and
$O(\log \frac{|G|}{|H|})$ queries to identify $H$.
Second, For $\mathsf{HSP}$ over a general finite Abelian group $G$, we devised an algorithm with $O(\sqrt{\frac{|G|}{|H|}})$ queries to decide the triviality of $H$ and an algorithm to identify $H$ with $O(\sqrt{\frac{|G|}{|H|}\log |H|}+\log |H|)$ queries. Third, a general lower bound on the query complexity of $\mathsf{HSP}$ was given, which leads to the observation that there exists an instance of $\mathsf{HSP}$ that needs $\omega(\sqrt{\frac{|G|}{|H|}})$ queries to identify $H$.  

These results not only answer the open problem proposed by Nayak \cite{Nayak2021}, but also extend the main results of Ref. \cite{Ye2019query}.
Furthermore, it is worth exploring whether there exist similar deterministic algorithms for the non-Abelian hidden subgroup problem.  
\bibliography{GSP}
\end{document}